\theoremstyle{plain}
\newtheorem{proposition}{Proposition}
\def\BibTeX{{\rm B\kern-.05em{\sc i\kern-.025em b}\kern-.08em
		T\kern-.1667em\lower.7ex\hbox{E}\kern-.125emX}}
\begin{document}
\renewcommand{\qedsymbol}{}
\title{ Low-PAPR OFDM-ISAC Waveform Design Based on Frequency-Domain Phase Differences}

\author{Kaimin~Li,
	Jiahuan~Wang,
	Haixia~Cui, Bingpeng Zhou and Pingzhi Fan
	\thanks{Kaimin Li, Jiahuan Wang and Haixia Cui are with the School of Electronic Science and Engineering (School of Microelectronics), South China Normal University, Foshan 528225, China (Email: 2023025040@m.scnu.edu.cn, jiahuanwang@m.scnu.edu.cn, cuihaixia@scnu.edu.cn).}
	\thanks{Bingpeng Zhou is with the School of Electronics and Communication Engineering, Sun Yat-sen University, Shenzhen 518000, China (email:zhoubp3@mail.sysu.edu.cn).}
	\thanks{Pingzhi Fan is with the School of Information Science and Technology, Southwest Jiaotong University, Chengdu 611756, China (email: pzfan@swjtu.edu.cn).}
}

\markboth{Journal of \LaTeX\ Class Files,~Vol.~xx, No.~xx, Dec~2024}%
{Shell \MakeLowercase{\textit{et al.}}: Bare Demo of IEEEtran.cls for IEEE Journals}

\maketitle
\begin{abstract}
	Low peak-to-average power ratio (PAPR) orthogonal frequency division multiplexing (OFDM) waveform design is a crucial issue in integrated sensing and communications (ISAC). This paper introduces an OFDM-ISAC waveform design that utilizes the entire spectrum simultaneously for both communication and sensing by leveraging a novel degree of freedom (DoF): the frequency-domain phase difference (PD). Based on this concept, we develop a novel PD-based OFDM-ISAC waveform structure and utilize it to design a PD-based Low-PAPR OFDM-ISAC (PLPOI) waveform. The design is formulated as an optimization problem incorporating four key constraints: the time-frequency relationship equation, frequency-domain unimodular constraints, PD constraints, and time-domain low PAPR requirements. To solve this challenging non-convex problem, we develop an efficient algorithm,  ADMM-PLPOI, based on the alternating direction method of multipliers (ADMM) framework. Extensive simulation results demonstrate that the proposed PLPOI waveform achieves significant improvements in both PAPR and bit error rate (BER) performance compared to conventional OFDM-ISAC waveforms.
\end{abstract}

\begin{IEEEkeywords}
 Integrated sensing and communication (ISAC), orthogonal frequency division multiplexing (OFDM), peak-to-average power ratio (PAPR), frequency domain phase difference. 
\end{IEEEkeywords}

%
\IEEEpeerreviewmaketitle

\section{Introduction}
%
%
%
%


\IEEEPARstart{I}{ntegrated} {sensing and communication (ISAC) has emerged as a key technology for next-generation wireless networks, driven by the growing demand for spectrum resources \cite{liu2022integrated,akan2020internet,feng2020joint,zhang2021overview,liu2022survey,liu2020joint}, which improves system efficiency and meets the practical requirements of sixth-generation (6G) applications such as vehicle-to-everything \cite{meng2024sensing} and massive Internet of Things (IoT) \cite{qi2020integrated}. The core of ISAC implementation is waveform design  \cite{zhou2022integrated}, which must strike a balance between high-rate data transmission and accurate target detection while addressing practical challenges such as interference and hardware limitations. The OFDM waveform, employed in existing 4G and 5G systems, is a leading candidate for 6G ISAC due to its multi-carrier structure, which provides robustness against frequency-selective fading and supports flexible subcarrier allocation for efficient resource sharing \cite{chen2024}. However, the high peak-to-average power ratio (PAPR) of OFDM signals poses a challenge for ISAC waveform design, as it reduces the power amplifier efficiency and induces nonlinear distortion, which degrades the communication and sensing performance in ISAC systems \cite{varshney2023}.}
\begin{figure*}[htbp]
	\centering
	\includegraphics[width=1\textwidth]{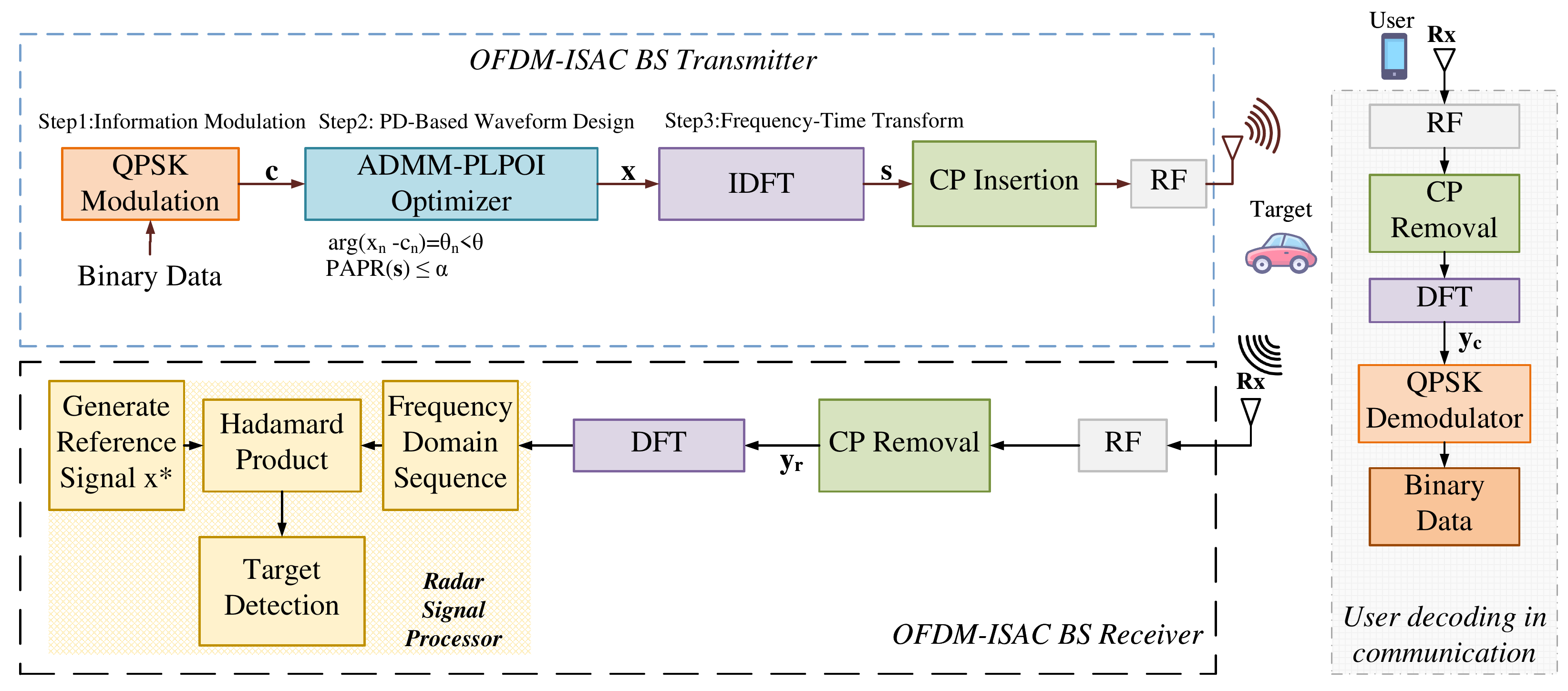}
	\caption{System model.}
	\label{fig:sysmod0}
\end{figure*}

Reducing PAPR has always been a critical issue in OFDM waveform design  \cite{hao2021,siluveru2024,tank2024,cinemre2024,arulkarthick2024}. In conventional communication systems, methods for reducing PAPR primarily include three approaches: signal distortion techniques \cite{siluveru2024}, multiple signaling and coding techniques  \cite{cinemre2024}, and probability techniques  \cite{tank2024}. For signal distortion techniques, recent research focuses on adaptive gain adjustment methods \cite{siluveru2024} and the combination of windowing functions with clipping parameters \cite{arulkarthick2024}, which effectively reduce PAPR while maintaining acceptable BER performance. In coding techniques, advances have been made through Gaussian matrix-based precoding \cite{cinemre2024} and hybrid precoded-companding schemes  \cite{arulkarthick2024}, demonstrating improved PAPR reduction with enhanced BER performance. For multiple signaling techniques, modified selective mapping approaches \cite{tank2024} have been developed to reduce computational complexity while maintaining PAPR reduction capability, and lexicographical permutation methods \cite{tank2024} have shown promise in reducing system complexity by up to 90$\%$ while achieving comparable error performance to traditional selective mapping.

However, PAPR reduction techniques used in communication-only OFDM systems cannot be directly applied to OFDM-ISAC systems, as OFDM-ISAC systems must simultaneously satisfy communication performance and sensing performance. To meet these dual-function requirements, several waveform design methods specifically for OFDM-ISAC systems have been proposed recently. In \cite{huang2022}, the authors aimed to achieve low PAPR while maintaining sensing capability. They considered a flexible RadCom structure in which the communication subcarriers are located in continuous radar frequency bands and proposed an $l$-norm cyclic algorithm (LNCA) to design waveforms that meet the low PAPR requirements. In \cite{varshney2023}, the work focused on enhancing PAPR optimization under zero correlation sidelobe level constraint by introducing a direct optimization approach based on MM technique, which enables PAPR values to approach 1.  Furthermore, to address computational complexity in large-scale scenarios, the authors in \cite{wu2024} proposed an ADMM-based method that significantly reduces complexity while maintaining fast convergence with large numbers of subcarriers. In addition, in \cite{bazzi2023}, an ADMM-based DFRC waveform design method was proposed that minimizes multi-user interference under radar similarity and PAPR constraints, and an iterative algorithm was developed to obtain a stable waveform with desired PAPR characteristics. In \cite{hu2022}, a joint waveform optimization framework was proposed for MIMO-OFDM based DFRC systems to minimize PAPR while maintaining dual-functional performance. To solve this non-convex problem, an SDR-based algorithm was proposed to decompose the original problem into parallel sub-problems, which can effectively achieve low PAPR.


Nevertheless, existing research mainly focuses on various approaches for PAPR reduction in ISAC systems. Some methods partition the spectrum into a communication band and an optimization band, where the PAPR reduction performance is influenced by the ratio between these bands. Another category of approaches includes a weighted optimization framework \cite{liu2018} that balances radar and communication performance using a weighting parameter, and an extended formulation with PAPR constraints for OFDM signals \cite{chen2024}. In practical applications, if the parameters are not properly selected, achieving satisfactory PAPR optimization becomes increasingly challenging.

To address these limitations, this paper proposes a unified OFDM-ISAC waveform design method based on frequency-domain phase difference (PD). Unlike existing approaches, our method enables simultaneous utilization of the entire frequency band for both communication and sensing purposes without weighting parameter. By introducing the concept of frequency-domain PD between the transmit signal and preset information signal, we develop a novel optimization framework that jointly considers PAPR reduction, spectral efficiency, and sensing performance. Through meticulous selection of the PD threshold and optimization of waveform parameters, our approach effectively realizes integrated sensing and communication functionality while maintaining system performance. The main contributions can be summarized as follows:

\begin{itemize}
	\item We propose a novel low PAPR OFDM-ISAC waveform design that leverages frequency-domain PD as a new degree of freedom (DoF). Unlike traditional approaches that rely on spectrum partitioning, our approach utilizes the entire spectrum for both communication and sensing simultaneously, based on PD, allowing for more efficient spectrum usage.

	\item To address the non-convex optimization problem with coupled constraints, we develop an ADMM-based algorithm that solves the parameter coupling between time-domain PAPR and frequency-domain PD constraints through variable splitting, achieving efficient convergence with FFT-based implementation.
	
	\item The proposed PD-based low-PAPR OFDM-ISAC waveform design effectively reduces PAPR without compromising spectral efficiency. By adjusting the PD threshold $\theta$ within its feasible range, we can control PAPR while maintaining the constant modulus property in the frequency domain, which is critical for sensing.
	
	\item The proposed waveform demonstrates excellent dual-functional performance with good BER performance in communication and desirable periodic auto-correlation characteristics in sensing, validating its effectiveness for practical OFDM-ISAC systems.
\end{itemize}

The rest of this paper is organized as follows: Section II introduces the OFDM-ISAC system and waveform structure. In Section III, we propose the OFDM-ISAC  waveform optimization problem and introduce the ADMM framework to solve it. In Section IV, Simulation verification are presented to show the superiority of the proposed OFDM-ISAC waveform. Finally, the paper's conclusions are detailed in Section VI.

\emph{Notations}: Throughout this manuscript, bold lowercase letters represent vectors, while bold uppercase letters represent matrices The symbols $(\cdot)^*$, $(\cdot)^T$ and $(\cdot)^H$ denote the conjugate, transpose and conjugate transpose, respectively. The 2-norm of a vector $\mathbf{a}$ and the $\infty$ -norm of a vector $\mathbf{a}$ are indicated by $\|\mathbf{a}\|_2$ and $\|\mathbf{a}\|_\infty$, respectively. The operation $\circ$ denotes the Hadamard product.

\section{OFDM ISAC Systems And Proposed Waveform Structure}
\subsection{OFDM-ISAC System Model}
\begin{figure*}[htbp]
	\centering
	\includegraphics[width=1\textwidth]{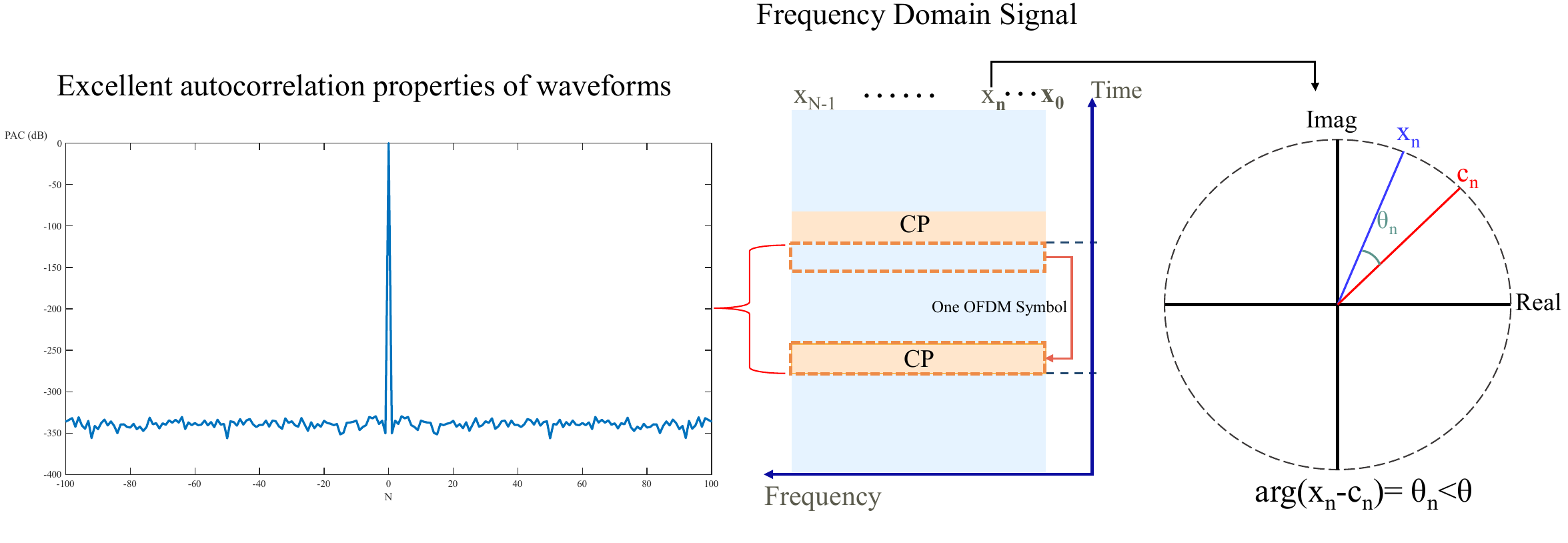}
	\caption{OFDM-ISAC signal structure.}
	\label{xicp}
\end{figure*}
Consider an OFDM-ISAC system, which consists of a base station (BS), a communication user, and multiple targets,  as illustrated in Fig. \ref{fig:sysmod0}. The BS functions as both a  communication transmitter and a sensing receiver. Specifically, it is equipped with a transmit antenna to send OFDM-ISAC signals and a receive antenna to capture echoes \cite{huang2022}. 


In the transmitter, the binary data is modulated via QPSK to obtain the intended frequency-domain communication sequence $\mathbf{c}=[c_0,c_1,\cdots, c_{N-1}]^T$, where $N$ is the number of subcarriers. Then the proposed PD-based waveform design scheme\footnote{The details of the PD-based waveform design scheme are provided in Sections II-B and III} is applied to $\mathbf{c}$, resulting in the desired frequency-domain ISAC signal $\mathbf{x}\in \mathbb{C}^{N}$. After performing an $M$-points inverse discrete Fourier transform (IDFT), the OFDM symbol $\mathbf{s}\in \mathbb{C}^{M}$ is obtained.  The symbol $\mathbf{s}$ is then transmitted after cyclic prefix (CP) insertion via the radio frequency (RF) chain, where CP is denoted as $\mathbf{s}_{CP}\in \mathbb{C}^{N_{CP}}$ and $N_{CP}$ is the length of CP.

In the communication receiver,  after CP removal operations, the discrete-time received signal undergoes discrete Fourier transform (DFT) transformation to yield its frequency-domain representation \cite{li1999channel}:
\begin{equation}
	\mathbf{y}_c = h_c\mathbf{x}+\mathbf{w}_c,\label{eq:yc}
\end{equation}
where $h_c$ is the communication channel frequency response and $\mathbf{w}_c$ denotes the frequency-domain noise.  
In the sensing receiver, the time-domain  echoes reflected by the targets can be represented by $\mathbf{y}_r$. The sensing receiver processes the reflected signal by extracting a segment starting $N_{CP}$ after transmission with a length of $N$ \cite{huang2022}. The extracted signal is then transformed into the frequency domain using a DFT, where it undergoes matched filtering by multiplying with the conjugate of $\mathbf{x}$. Subsequently, an IDFT is applied to the product to convert it back to the time domain, generating a range profile in which peaks correspond to target delays and distances. This approach effectively implements pulse compression, thereby enhancing resolution and target detectability.  The Doppler shift is estimated by leveraging the DFT of the signal samples on each subcarrier \cite{huang2022}. 

\subsection{Proposed PD-based OFDM-ISAC Waveform Structure}

In order to fully utilize spectrum resources, we propose a new OFDM-ISAC waveform structure that simultaneously meets sensing and communication requirements.  The proposed waveform structure in frequency domain is given by
\begin{equation}
	\mathbf{x} = \mathbf{c}\circ \mathbf{e}, \label{eq: xce}
\end{equation}
where $\mathbf{x}$ is required to be unimodular, i.e.,
\begin{equation}
	|x_n|=1, n=0,1,2,\cdots, N-1, \label{eq:xuni}
\end{equation}
which ensures desirable time-domain periodic auto-correlation (PAC) properties to enhance the sensing performance \cite{huang2022}. Besides, $\mathbf{e} = [e^{j\theta_0},e^{j\theta_1},\cdots,e^{j\theta_{N-1}}]^T$ represents the phase shift sequence and  $\theta_n$ denotes the phase difference between $c_n$ and $x_n$, i.e.,
  \begin{equation}
  	\mathrm{arg}(x_n-c_n)=\theta_n, \,\, n=0, 1, \cdots, N-1.
  \end{equation}

By substituting \eqref{eq: xce} into \eqref{eq:yc}, the frequency-domain received signal at the communication receiver is given by
\begin{equation}
	{y}_n^c = h_c c_n e^{jn\theta_n} +w_n^c, n=0.1,\cdots, N-1, \label{eq:ync}
\end{equation}
where ${y}_n^c$ is the $n$-th element of $\mathbf{y}_c$ and $w_n^c$ is the $n$-th element of $\mathbf{w}_c$.  In order to correctly demodulate the symbol $c_n$  from ${y}_n^c$,  the phase difference $\theta_n$ should be constrained by a threshold $\theta \in (0,\pi/4)$, i.e.,
\begin{equation}
	\mathrm{arg}(x_n-c_n)=\theta_n<\theta. \label{ineq:arg}
\end{equation}


As illustrated in  Fig.~\ref{xicp}, the core idea of the proposed method is to design $\mathbf{x}$ based on the communication symbols $\mathbf{c}$, such that 
$\mathbf{x}$ possesses both communication and sensing capabilities. This is achieved by applying a phase diffenece $\theta_n$ to each communication symbol $c_n$. In other words, flexible design of $\mathbf{x}$ can be achieved by adjusting $\theta_n$ within an appropriate threshold $\theta$, which provides degrees of freedom (DoF) for the design of $\mathbf{x}$. Furthermore, by fully exploiting this DoF, the proposed waveform can effectively control the PAPR in the time domain,  as discussed below.



For a given $\mathbf{x}\in \mathbb{C}^{N}$, the PAPR of an OFDM waveform is equivalent to that of the four-times oversampled discrete signal $\mathbf{s}\in \mathbb{C}^{M}$, where $M=4N$ \cite{huang2022},  $\mathbf{s}$ is the IDFT of $\mathbf{x}$, i.e.,
\begin{equation}
	\mathbf{s} = \mathbf{Ax},\label{eq:sAx}
\end{equation}
and $\mathbf{A}\in \mathbb{C}^{M\times N}$ represents the IDFT matrix \cite{wang2019}:
\begin{equation}\mathbf{A}=\begin{bmatrix}1&1&1&\cdots&1\\1&e^{j2\pi/M}&e^{j2\pi/2/M}&\cdots&e^{j2\pi(N-1)/M}\\1&e^{j2\pi2/M}&e^{j2\pi4/M}&\cdots&e^{j2\pi2(N-1)/M}\\\vdots&\vdots&\vdots&\cdots&\vdots\\1&e^{\frac{j2\pi(M-1)}M}&e^{\frac{j2\pi2(M-1)}M}&\cdots&e^{\frac{j2\pi(M-1)(N-1)}M}\end{bmatrix}.\end{equation}
The PAPR of $\mathbf{s}$ is defined as \cite{tian2021}, \cite{wang2021}
\begin{equation}
	\mathrm{PAPR}(\mathbf{s})=\frac{\max_{m=0,1,...,M-1}|s_m|^2}{\frac1{M}\sum_{m=0}^{M-1}|s_m|^2}=\frac{\|\mathbf{s}\|_\infty^2}{\frac1{M}\|\mathbf{s}\|_2^2}.
\end{equation}
The low-PAPR requirement for $\mathbf{s}$ can be expressed as 
\begin{equation}
	\mathrm{PAPR}(\mathbf{s})\leq \alpha,\label{ineq:papr}
\end{equation}
where $\alpha$ represents the PAPR threshold.

\subsection{Problem Formulation}
Based on the proposed PD-based OFDM-ISAC waveform structure in Section II-B,  we formulate the following optimization problem to find the PD-based low-PAPR OFDM-ISAC (PLPOI)  waveform $\mathbf{s}$ and its frequency-domain counterpart $\mathbf{x}$:
\begin{subequations}\label{opt0}
	\begin{align}
		\boldsymbol{P}_0:\,\,	\textbf{find}\quad&\textbf{x , s}\\
		\mbox{s.t.}\quad\
		&\mathbf{A}\mathbf{x}=\mathbf{s}\label{opt0:b}\\
		&\mathrm{PAPR}(\mathbf{s})\leq\alpha \label{opt0:c}\\
		&\arg(x_{n}-c_{n})<\theta,\,\, n=0,1,\cdots,N-1\label{opt0:d}\\
		&|x_n|=1,\,\, n=0,1,\cdots,N-1.\label{opt0:e}
	\end{align}
\end{subequations}
where \eqref{opt0:b} represents  the time-frequency relationship equation. The low PAPR constraint and PD constraint are guaranteed by \eqref{opt0:c} and \eqref{opt0:d}, respectively. Furthermore, the unimodular constraint \eqref{opt0:e} is imposed to  guarantee  perfect PAC properties.

The optimization problem $\boldsymbol{P}_0$ is challenging due to the fact that i) frequency-domain signal \(\mathbf{x}\) and time-domian signal \(\mathbf{s}\) are tightly coupled in the constraint \eqref{opt0:b}; ii) the fractional form of PAPR for time-domain signal $\mathbf{s}$ makes the PAPR inequality constraint \eqref{opt0:c} non-convex;
		iii) the PD inequality and  unimodular constraint on frequency-domain signal $\mathbf{x}$ make the \eqref{opt0:d} and \eqref{opt0:e} both non-convex.

\section{Design of PD-based Low-PAPR OFDM-ISAC Waveforms Using an ADMM-Based Approach}
In this section, we address the design of a frequency-domain PD-based low-PAPR OFDM-ISAC (PLPOI) waveform based on the proposed ADMM-based algorithm, referred to as ADMM-PLPOI.

\subsection{The Proposed ADMM-PLPOI Algorithm}
To address these challenges effectively, 
we propose an efficient iterative algorithm based on ADMM framework. Firstly, the ADMM framework is employed to decouple the variables $\mathbf{x} $ and $\mathbf{s}$, which can transform the original problem into two subproblems that are easier to solve independently \cite{boyd2011}.  In the following, we introduce the ADMM framework:
\begin{subequations}\label{Zd}
	\begin{align}
		&\mathbf{x}^{(k+1)}=\underset{\mathbf{x}\in\mathcal{X}}{\operatorname*{argmin}}L_{\rho}\left(\mathbf{x},\mathbf{s}^{(k)},\mathbf{y}^{(k)}\right), \label{Za}\\
		&\mathbf{s}^{(k+1)}=\underset{\mathbf{s}\in\mathcal{S}}{\operatorname*{\mathrm{argmin}}}L_{\mathbf{\rho}}\left(\mathbf{x}^{(k+1)},\mathbf{s},\mathbf{y}^{(k)}\right), \label{Zb} \\
		&\mathbf{y}^{(k+1)}=\mathbf{y}^{(k)}+\rho(\mathbf{A}\mathbf{x}^{(k+1)}-\mathbf{s}^{(k+1)}), \label{Zc}
	\end{align}
\end{subequations}
where $\mathbf{x} \in \mathcal{X}$ represents frequency-domain related constraints \eqref{opt0:d} and \eqref{opt0:e}, and $\mathbf{s} \in \mathcal{S}$  corresponds to time-domain related constraint \eqref{opt0:c}. In addition,
$L_{\rho}(\mathbf{x}, \mathbf{s}, \mathbf{y})$ in \eqref{Zd} is the augmented Lagrangian function, which can be expressed as \begin{equation}L_{\rho}(\mathbf{x},\mathbf{s},\mathbf{y})=\mathrm{Re}\{\mathbf{y}^{H}(\mathbf{A}\mathbf{x}-\mathbf{s})\}+\frac{\rho}{2}\|\mathbf{A}\mathbf{x}-\mathbf{s}\|_{2}^{2}. \label{eq:la}
\end{equation}
Here, the penalty parameter $\rho > 0$, $\mathbf{y}$ denotes the Lagrange multiplier, and $k$ indicates the iteration number. Subsequently, we only need to focus on solving subproblems \eqref{Za} and \eqref{Zb}.


\subsection{Solving Subproblem \eqref{Za}}
By substituting \eqref{eq:la} into \eqref{Za}, the subproblem \eqref{Za} can be transformed into
\begin{subequations}\label{Ba}
	\begin{align}
		\min_{\mathbf{x}\in\mathbf{C}^{N}}\quad&\mathrm{Re}\left\{\mathbf{y}^{(k)H}\left(\mathbf{A}\mathbf{x}-\mathbf{s}^{(k)}\right)\right\}+\frac{\rho}{2}\left\Vert\mathbf{A}\mathbf{x}-\mathbf{s}^{(k)}\right\Vert_{2}^{2}\label{Ca}\\
		\mbox{s.t.}\quad\
		&\arg(x_{n}-c_{n})<\theta,\,\, n=0,1,\cdots,N-1\label{BBb}\\
		&|x_n|=1,\,\, n=0,1,\cdots,N-1.\label{BBc}
	\end{align}
\end{subequations}
In fact, the problem \eqref{Ba} is still non-convex with respect to $\mathbf{x}$. In the following, to address this and derive the analytical solution, we first reformulate the objective function into a complete square form, allowing us to obtain a relaxed solution by extending the constraints to $\mathbb{C}^N$. Then, we project the relaxed solution onto the nonconvex feasible set defined by constraints \eqref{BBb} and \eqref{BBc}.

\emph{1) Objective Transformation and Relaxation} : For the objective transformation, we  present the following proposition:
\begin{proposition}
	The objective function in \eqref{Ca} can be  equivalent to the following form: 
	\begin{align}
		\frac{\rho M}{2} \left\|\mathbf{x}-M\mathbf{A}^{H}\left(\mathbf{s}^{(k)}-\frac{1}{\rho} \mathbf{u}^{(k)}\right)\right\|_{2}^{2}+\mathrm{const}.\label{opt:powerml}
	\end{align}%
	where $\mathrm{const}$ denotes a constant term.
\end{proposition}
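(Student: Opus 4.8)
The plan is to recognize the objective in \eqref{Ca} as a quadratic form in $\mathbf{x}$ and rewrite it as a single completed square, with the simplification hinging entirely on the orthogonality of the IDFT matrix $\mathbf{A}$.

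First I would expand the two pieces of the objective. Abbreviating $\mathbf{s}^{(k)}$ and $\mathbf{y}^{(k)}$ as $\mathbf{s}$ and $\mathbf{y}$, the penalty term gives $\|\mathbf{A}\mathbf{x}-\mathbf{s}\|_2^2 = \mathbf{x}^H\mathbf{A}^H\mathbf{A}\mathbf{x} - 2\,\mathrm{Re}\{\mathbf{x}^H\mathbf{A}^H\mathbf{s}\} + \|\mathbf{s}\|_2^2$, while the linear term rewrites as $\mathrm{Re}\{\mathbf{y}^H(\mathbf{A}\mathbf{x}-\mathbf{s})\} = \mathrm{Re}\{\mathbf{x}^H\mathbf{A}^H\mathbf{y}\} - \mathrm{Re}\{\mathbf{y}^H\mathbf{s}\}$, using $\mathbf{y}^H\mathbf{A}\mathbf{x} = (\mathbf{A}^H\mathbf{y})^H\mathbf{x}$.

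The key step is the orthogonality identity $\mathbf{A}^H\mathbf{A} = M\mathbf{I}_N$, which holds because the inner product of columns $n$ and $n'$ of $\mathbf{A}$ is the geometric sum $\sum_{m=0}^{M-1}e^{j2\pi m(n'-n)/M}$, equal to $M$ when $n=n'$ and $0$ otherwise (valid since $N\le M$). Substituting this collapses the quadratic term to $\frac{\rho M}{2}\|\mathbf{x}\|_2^2$, so that, dropping $\mathbf{x}$-independent terms, the objective reads $\frac{\rho M}{2}\|\mathbf{x}\|_2^2 + \mathrm{Re}\{\mathbf{x}^H\mathbf{A}^H(\mathbf{y}-\rho\mathbf{s})\}$.

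Finally I would complete the square. Comparing with $\frac{\rho M}{2}\|\mathbf{x}-\mathbf{b}\|_2^2 = \frac{\rho M}{2}\|\mathbf{x}\|_2^2 - \rho M\,\mathrm{Re}\{\mathbf{x}^H\mathbf{b}\} + \frac{\rho M}{2}\|\mathbf{b}\|_2^2$, the linear coefficients match when the center $\mathbf{b}$ is a fixed scalar multiple of $\mathbf{A}^H(\mathbf{s}-\frac{1}{\rho}\mathbf{y})$, with $\mathbf{u}^{(k)}$ playing the role of the dual variable $\mathbf{y}^{(k)}$; all remaining quantities ($\|\mathbf{s}\|_2^2$, $\mathrm{Re}\{\mathbf{y}^H\mathbf{s}\}$, and $\frac{\rho M}{2}\|\mathbf{b}\|_2^2$) are absorbed into $\mathrm{const}$, giving exactly \eqref{opt:powerml}. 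I do not expect a genuine obstacle: the derivation is purely algebraic, and the only substantive input is $\mathbf{A}^H\mathbf{A}=M\mathbf{I}_N$. The sole point demanding care is correct bookkeeping of the scalar multiplying $\mathbf{A}^H$ and the placement of the $1/\rho$ factor inside the center vector, so that the completed square reproduces both the $\mathbf{A}^H\mathbf{s}$ and $\mathbf{A}^H\mathbf{y}$ contributions with the right weights.
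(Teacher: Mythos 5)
Your proposal is correct and takes essentially the same route as the paper's proof: both arguments rest on the orthogonality identity $\mathbf{A}^H\mathbf{A}=M\mathbf{I}_N$ plus a completion of the square, differing only in bookkeeping order (the paper first absorbs the linear term into $\frac{\rho}{2}\|\mathbf{A}\mathbf{x}-\mathbf{s}^{(k)}+\frac{1}{\rho}\mathbf{y}^{(k)}\|_2^2$ and then applies orthogonality, whereas you expand everything in $\mathbf{x}$ first and match coefficients at the end). Note that the center your matching produces, $\frac{1}{M}\mathbf{A}^H\bigl(\mathbf{s}^{(k)}-\frac{1}{\rho}\mathbf{y}^{(k)}\bigr)$, agrees with what the paper's proof actually derives; the factor $M\mathbf{A}^H$ and the symbol $\mathbf{u}^{(k)}$ appearing in the proposition statement are typos for $\frac{1}{M}\mathbf{A}^H$ and the dual variable $\mathbf{y}^{(k)}$, exactly as you implicitly corrected.
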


\begin{proof}
	Based on the orthogonal property of IDFT matrix $\mathbf{A}$, we have
	\begin{align}
		\frac{1}{M}\mathbf{A}^{H}\mathbf{A}=\mathbf{I}_{N},\label{eq:oIDFT}
	\end{align}
	where $\mathbf{I}_{N}$ is the $N\times N$ identity matrix. Accordingly, we have
	\begin{equation}\begin{aligned}\label{ko}
			&\|\mathbf{Ax}-\mathbf{g}\|_{2}^{2} \\
			&=\mathbf{x}^H\mathbf{A}^H\mathbf{A}\mathbf{x}-2\mathrm{Re}\{\mathbf{x}^H\mathbf{A}^H\mathbf{g}\}+\|\mathbf{g}\|_2^2 \\
			&=M\|\mathbf{x}\|_2^2-2\mathrm{Re}\{\mathbf{x}^H\mathbf{A}^H\mathbf{g}\}+\|\mathbf{g}\|_2^2 \\
			&=M\|\mathbf{x}-\frac{1}{M}\mathbf{A}^H\mathbf{g}\|_2^2-\frac{1}{M}\|\mathbf{A}^H\mathbf{g}\|_2^2+\|\mathbf{g}\|_2^2,
	\end{aligned}\end{equation}
	where $\mathbf{g}\in \mathbb{C}^{M}$. 
	Then, based on \eqref{ko}, we complete the square for \eqref{Ca}:
	\begin{equation}
		\begin{split}
			&\text{Re}\{\mathbf{y}^{(k)H}(\mathbf{Ax}-\mathbf{s}^{(k)})\}+\frac{\rho}{2}||\mathbf{Ax}-\mathbf{s}^{(k)}||_{2}^{2} 
			\\&=\frac{\rho}{2}(2 \mathrm{Re}\{\frac{1}{\rho}\mathbf{y}^{(k)H}(\mathbf{Ax}-\mathbf{s}^{(k)})\}+||\mathbf{Ax}-\mathbf{s}^{(k)}||_{2}^{2}) 
			\\&=\frac{\rho}{2}(||\mathbf{Ax}-\mathbf{s}^{(k)}+\frac{1}{\rho}\mathbf{y}^{(k)}||_{2}^{2}-||\frac{1}{\rho}\mathbf{y}^{(k)}||_{2}^{2}) 
			\\&=\frac{\rho}{2}(M||\mathbf{x}-\frac{1}{M}\mathbf{A}^{H}(\mathbf{s}^{(k)}-\frac{1}{\rho}\mathbf{y}^{(k)})||_{2}^{2}-||\frac{1}{\rho}\mathbf{y}^{(k)}||_{2}^{2}
			\\&\quad -\frac{1}{M}||\mathbf{A}^{H}(\mathbf{s}^{(k)}-\frac{1}{\rho}\mathbf{y}^{(k)})||_{2}^{2}+||\mathbf{s}^{(k)}-\frac{1}{\rho}\mathbf{y}^{(k)}||_{2}^{2})\\
			&=\frac{\rho M}{2}||\mathbf{x}-\frac{1}{M}\mathbf{A}^{H}(\mathbf{s}^{(k)}-\frac{1}{\rho}\mathbf{y}^{(k)})||_{2}^{2}+\text{const}, 
		\end{split}
	\end{equation}
	where $\text{const}=-\frac{1}{2\rho}\|\mathbf{y}^{(k)}\|_{2}^{2}-\frac{\rho}{2M}\|\mathbf{A}^H(\mathbf{s}^{(k)}-\frac1\rho\mathbf{y}^{(k)})\|_2^2+\frac{\rho}{2}\|(\mathbf{s}^{(k)}-\frac1\rho\mathbf{y}^{(k)})\|_2^2$. 
\end{proof}


According to Proposition 1, the objective function of \eqref{Ba} can be rewritten as \eqref{opt:powerml}. Then we relax \eqref{Ba}  into an unconstrained optimization problem:
\begin{equation}
	\min_{\mathbf{x}\in \mathbb{C}^N}||\mathbf{x}-\frac{1}{M}\mathbf{A}^{H}(\mathbf{s}^{(k)}-\frac{1}{\rho}\mathbf{y}^{(k)})||_{2}^{2}, \label{opt:relax}
\end{equation}
 The solution to \eqref{opt:relax} is given by
\begin{equation}
	\bar{\mathbf{x}} =  \frac{1}{M}\mathbf{A}^{H}(\mathbf{s}^{(k)}-\frac{1}{\rho}\mathbf{y}^{(k)}),
\end{equation}
where $\bar{\mathbf{x}} $ is the relaxed solution of \eqref{Ba}.

\emph{2)Projection onto the Nonconvex Feasible Set}: Based on the above discussion, now we only need to  project $\bar{\mathbf{x}}$ onto constraints \eqref{BBb} and \eqref{BBc}, which is equivalent to solving the following optimization problem:
\begin{subequations}\label{Va}
	\begin{align}
		\min_{\mathbf{x}}\quad&\left\|\mathbf{x}-\bar{\mathbf{x}}\right\|_2^2\label{Na}\\
		\mbox{s.t.}\quad\
		&\arg(x_{n}-c_{n})<\theta, \,\, n=0,1,\cdots,N-1\label{Vab}\\
		&|x_n|=1, \,\, n=0,1,\cdots,N-1. \label{Vac}
	\end{align}
\end{subequations}
It can be observed that the optimization problem \eqref{Va} is  a unimodular quadratic programming (UQP) problem, which can be solved iteratively using the power method to obtain a local optimum \cite{soltanalian2014}.  The solution $\mathbf{x}^{(k+1)}$ is then updated iteratively with each component $x_n^{(k+1)}$ solved in parallel as follows:
\begin{align}\label{solu:x}
	x_n^{(k+1)} = 
	\begin{cases} 
		e^{j\arg(\bar{x}_n^{(k)})}, & \text{if } |\arg(\bar{x}_n^{(k)} - c_{n})| < \theta, \\
		e^{j(\arg(c_{n}) + \theta)}, & \text{if } \arg(\bar{x}_n^{(k)} - c_{n}) > \theta, \\
		e^{j(\arg(c_{n}) - \theta)}, & \text{if } \arg(\bar{x}_n^{(k)} - c_{n}) < -\theta,
	\end{cases}
\end{align}
where $n=0,1,\cdots,N-1$.


\subsection{Solving Subproblem \eqref{Zb}}

By substituting \eqref{eq:la} into \eqref{Zb}, the subproblem \eqref{Zb} can be transformed into

\begin{subequations}\label{Ma}
	\begin{align}
		\min_{\mathbf{s}\in\mathbb{C}^M}\quad&\mathrm{Re}\left\{\mathbf{y}^{(k)H}\left(\mathbf{A}\mathbf{x}^{(k+1)}-\mathbf{s}\right)\right\}+\frac\rho2\left\Vert\mathbf{A}\mathbf{x}^{(k+1)}-\mathbf{s}\right\Vert_2^2\\
		\mathrm{s.t.}\quad\
		&\frac{\|\mathbf{s}\|_\infty^2}{\frac1M\|\mathbf{s}\|_2^2}\leq\alpha.
	\end{align}
\end{subequations}

Similar to the objective function transformation in Proposition 1, the problem \eqref{Ma} can be reformulated as 
\begin{subequations}\label{Sa}
	\begin{align}
		\min_{\mathbf{s}\in\mathbb{C}^M}\quad&\left\|\mathbf{q}^{(k)}-\mathbf{s}\right\|_2^2\label{Saa}\\
		\mathrm{s.t.}\quad\
		&\frac{\|\mathbf{s}\|_\infty^2}{\frac1M\|\mathbf{s}\|_2^2}\leq\alpha,\label{Aa}
	\end{align}
\end{subequations}
where
\begin{equation}\label{p1}
	\mathbf{q}^{(k)}=\mathbf{A}\mathbf{x}^{(k+1)}+\frac{1}{\rho}\mathbf{y}^{(k)}.
\end{equation}

The problem \eqref{Sa} is still challenging due to the non-convexity of the PAPR constraint. By jointly considering the constraints \eqref{opt0:b} and \eqref{opt0:e}, one can obtain $\|\mathbf{s}\|_{2}^2=M$. Thus,  the  PAPR constraint can be equivalently transformed into the following constraints: 
\begin{equation}
	\|\mathbf{s}\|_{2}^2=M,\,\, \text{and}\,\, |s_m|\leq \sqrt{\alpha}.\label{eq:2eqc}
\end{equation}
Next, we introduce auxiliary variables $\beta$, $\mathbf{v}$ and let $\mathbf{s}=\beta\mathbf{v}$, where $\beta>0$ and $\|\mathbf{v}\|_2^2=1$.
 By substituting  $\mathbf{s}=\beta\mathbf{v}$ into the objective \eqref{Saa} and constraint \eqref{eq:2eqc},  the optimization problem \eqref{Sa} can be reformulated as
\begin{subequations}\label{Da}
	\begin{align}
		\min_{\mathbf{v}\in\mathbb{C}^{M}, {\beta}>0}& \beta^{2}-2\beta\operatorname{Re}\left(\mathbf{v}^{H}\mathbf{q}^{(k)}\right) \label{Db} \\
		\text{s.t.}\quad\
		& |\nu_m|\leq\sqrt{\frac\alpha M},\quad m=0,1,\cdots,M-1 \label{Dc} \\
		&\|\mathbf{v}\|_{2}^{2}=1.\label{Dd}
\end{align}\end{subequations}


Due to the independence of $\beta$ and $\mathbf{v}$ in the optimization problem \eqref{Da}, we can optimize one variable while holding the other fixed, thereby enabling a separable optimization approach.

1) Fix $\beta$: Given $\beta>0$, the problem \eqref{Da} can be expressed as:
\begin{subequations}\label{Fa}
	\begin{align}
		\min_{\mathbf{v}\in\mathbb{C}^M}\quad
		&-\mathrm{Re}\left(\mathbf{v}^H\mathbf{q}^{(k)}\right)\\
		\mathrm{s.~t.~}\quad
		&|\nu_m|\leq\sqrt{\frac\alpha M},\quad m=0,1,\cdots,M-1\\
		&\|\mathbf{v}\|_2^2=1,
\end{align}\end{subequations}
which can be solved using a parallel approach to obtain $\mathbf{v}^{(k+1)}$ \cite{wang2019}:
\begin{equation} \label{Ga} \nu_m^{(k+1)} = \begin{cases} \frac{q_m^{(k)}}{2 \gamma^{(k)}}, & \text{if } \frac{|q_m^{(k)}|}{2 \gamma^{(k)}} < \sqrt{\frac{\alpha}{M}},\\
		\ \sqrt{\frac{\alpha}{M}} e^{j \arg \left( q_m^{(k)} \right)}, & \text{otherwise}, \end{cases} \end{equation}
where $m=0,1,\cdots,M-1$. The value of 
$\gamma^{(k)}$ is determined via  bisection to ensure  $\|\mathbf{v}^{(k+1)}\|_2^2=1$. The details are summarized in TABLE \ref{tbb}.

2) Fix $\mathbf{v}^{(k+1)}$: By substituting the solved value of $\mathbf{v}^{(k+1)}$ into the problem \eqref{Da}, we have
\begin{equation}\min_{\beta\geq0}\beta^{2}-2\beta\operatorname{Re}\left(\mathbf{v}^{(k+1)H}\mathbf{q}^{(k)}\right).\end{equation}
The solution for $\beta$ is given by
\begin{equation}\beta^{(k+1)}=\max\left(\text{Re}\left\{\mathbf{v}^{(k+1)H}\mathbf{q}^{(k)}\right\},0\right).\label{eq:betak1}\end{equation}
Then, based on $\mathbf{s}=\beta \mathbf{v}$, along with \eqref{Ga} and \eqref{eq:betak1}, we obtain  
\begin{equation}\mathbf{s}^{(k+1)}=\mathbf{\beta}^{(k+1)}\mathbf{v}^{(k+1)}.\end{equation}

Thus, the ADMM-PLPOI algorithm to problem \eqref{opt0} is summarized in Algorithm \ref{power1}.
\begin{table}[htbp]
	\centering
	\caption{Bisection Method for Determining Parameter $\gamma^{(k)}$}
	\label{tbb}
	\begin{tabular}{p{0.95\linewidth}}
		\toprule
		\textbf{Initialization for $k = 0$:} \\
		\quad Define search interval $(\gamma_{\mathrm{l}}^{(0)}, \gamma_{\mathrm{r}}^{(0)})$ where \\
		\quad $\bullet$ $\gamma^{(0)} \in (\gamma_{\mathrm{l}}^{(0)}, \gamma_{\mathrm{r}}^{(0)})$ \\
		\quad $\bullet$ $\gamma_{\mathrm{l}}^{(0)} = 0$ \\
		\quad $\bullet$ $\gamma_{\mathrm{r}}^{(0)}$ is sufficiently large \\[2ex]
		
		\textbf{Iterative Process for $k \geq 0$:} \\
		\quad 1. Compute the midpoint: \(\gamma^{(k)} = \frac{\gamma_{\mathrm{l}}^{(k)} + \gamma_{\mathrm{r}}^{(k)}}{2}\). \\
		\quad 2. Update the solution vector \(\mathbf{v}^{(k+1)}\) using equation \eqref{Ga}. \\
		\quad 3. Evaluate \(f(\gamma^{(k)}) = \|\mathbf{v}^{(k+1)}\|_{2}^{2} - 1\): \\
		\quad \qquad $\bullet$ If \(f(\gamma^{(k)}) > 0\): Set \(\gamma_{\mathrm{l}}^{(k+1)} = \gamma^{(k)}\) and \(\gamma_{\mathrm{r}}^{(k+1)} = \gamma_{\mathrm{r}}^{(k)}\). \\
		\quad \qquad $\bullet$ If \(f(\gamma^{(k)}) < 0\): Set \(\gamma_{\mathrm{l}}^{(k+1)} = \gamma_{\mathrm{l}}^{(k)}\) and \(\gamma_{\mathrm{r}}^{(k+1)} = \gamma^{(k)}\). \\[2ex]
		
		\textbf{Termination:} \\
		\quad $\bullet$ Set the final value of \(\gamma^{(k+1)} = \frac{\gamma_{\mathrm{l}}^{(k+1)} + \gamma_{\mathrm{r}}^{(k+1)}}{2}\), if $|f(\gamma^{(k)})|<\epsilon$ . \\
		\bottomrule
	\end{tabular}
\end{table}

\begin{algorithm}[!t]
	\caption{ADMM-PLPOI Algorithm for PD-based Low PAPR OFDM-ISAC Wavefrom Design}
	\label{power1}
	\begin{algorithmic}[0]
		\STATE \textbf{Input:}
		\STATE \hspace{0.5cm} $\bullet$ System parameters: $N$, $M$
		\STATE \hspace{0.5cm} $\bullet$ Control parameters: penalty factor $\rho$, PAPR threshold $\alpha$, PD threshold $\theta \in (0, \pi/2)$ 
		\STATE \hspace{0.5cm} $\bullet$ Signal parameter: pre-defined information signal $\mathbf{c}$
		\STATE \textbf{Initialize:}
		\STATE \hspace{0.5cm} $\bullet$ Variables: $(\mathbf{x}^{(0)}, \mathbf{s}^{(0)}, \mathbf{y}^{(0)})$
		\STATE \textbf{Iterate:} For $K = 1,2,3,\cdots$ until $K > 150$
		\STATE \textbf{step 1:} compute $\mathbf{b}^{(k)} = \frac{1}{M}A^H(\mathbf{s}^{(k)} - \frac{1}{\rho}\mathbf{y}^{(k)})$
		\STATE \hspace{1.2cm} For $i=0,1,2,\cdots,N-1$:
		\STATE \hspace{1.2cm} $x_n^{(k+1)} = \begin{cases} 
			e^{j\arg(b_n^{(k)})} & |\arg(b_n^{(k)}) - c_i| < \theta \\[0.2cm]
			e^{j(c_n+\theta)} & \arg(b_n^{(k)}) - c_n > \theta \\[0.2cm]
			e^{j(c_n-\theta)} & \arg(b_n^{(k)}) - c_n < -\theta
		\end{cases}$
		
		\STATE \textbf{step 2:} compute $\mathbf{q}^{(k)} =\mathbf{ A}\mathbf{x}^{(k+1)} + \frac{1}{\rho}\mathbf{y}^{(k)}$
		\STATE \textbf{step 3:} compute $\mathbf{v}^{(k+1)}$ to find $\gamma^{(k)}$ by binary search
		\STATE \textbf{step 4:} compute $\beta^{(k+1)} = \max\{\text{Re}\{\mathbf{v}^{(k+1)H}\mathbf{q}^{(k)}\}, 0\}$
		\STATE \textbf{step 5:} compute $\mathbf{s}^{(k+1)} = \beta^{(k+1)}\mathbf{v}^{(k+1)}$
		\STATE \textbf{step 6:} compute $\mathbf{y}^{(k+1)} = \mathbf{y}^{(k)} + \rho(\mathbf{A}\mathbf{x}^{(k+1)} - \mathbf{s}^{(k+1)})$
		
		\STATE \textbf{Output:} Optimized signal $\mathbf{x}^{(k+1)}$ and $\mathbf{s}^{(k+1)}$ when stopping criteria is met
		
	\end{algorithmic}
\end{algorithm}

\subsection{A Simple Example}
To demonstrate the effectiveness of our proposed method, we present an illustrative example of the PLPOI waveform design. 

We begin with a conventional communication-only OFDM signal  $\mathbf{s}^c$  whose frequency-domain sequenc $\mathbf{c}$ is generated using random QPSK modulation.  The phases of the frequency-domain sequence $\mathbf{c}$ are displayed in the leftmost column of Table  \ref{arg}.  Initially, this OFDM signal $\mathbf{s}^c$ exhibits a PAPR of 9.7 dB. We then apply our proposed waveform design method with a specified PD threshold
 $\theta=0.6$,   resulting in the PLPOI waveform $\mathbf{s}^x$, with its corresponding frequency-domain sequence $\mathbf{x}$, also shown in Table \ref{arg}. 
 The results confirm that the phase difference  between sequences $\mathbf{c}$ and $\mathbf{x}$ remains confined within the interval  $[-0.6,0.6]$, satisfying the predetermined PD constraint.

 The PAPR reduction performance is visualized in Fig. \ref{fig:papr_example}, which presents the amplitude plots of both $\mathbf{s}^c$ and $\mathbf{s}^x$ alongside their respective PAPR values. The comparison reveals that while the original signal $\mathbf{s}^c$ exhibits large amplitude fluctuations with a PAPR of 9.7 dB, the optimized waveform $\mathbf{s}^x$ demonstrates markedly smaller amplitude variations with a PAPR of only 3.0 dB. This significant reduction in PAPR (a decrease of 6.7 dB) validates the effectiveness of our proposed waveform design method.



\begin{table}[htbp]
	\centering
	\caption{Signal Phase Arguments and PD with $\theta=0.6$}
	\label{arg}
	\begin{tabular}{|c|c|c|}
\hline
$arg(c_n)$ & $arg(x_n)$ & $arg(x_n)-arg(c_n)$ \\ \hline
0.7854 & 0.7895 & 0.0041 \\ \hline
-0.7854 & -0.3726 & 0.4128 \\ \hline
0.7854 & 1.3752 & 0.5898 \\ \hline
0.7854 & 1.2929 & 0.5075 \\ \hline
-0.7854 & -0.3989 & 0.3865 \\ \hline
0.7854 & 0.9513 & 0.1659 \\ \hline
-2.3562 & -2.1435 & 0.2127 \\ \hline
-2.3562 & -2.9562 & -0.6000 \\ \hline
-0.7854 & -1.3686 & -0.5832 \\ \hline
0.7854 & 1.0339 & 0.2485 \\ \hline
\end{tabular}

\end{table}

\begin{figure}[htbp]
	\centering
	\includegraphics[width=0.5\textwidth]{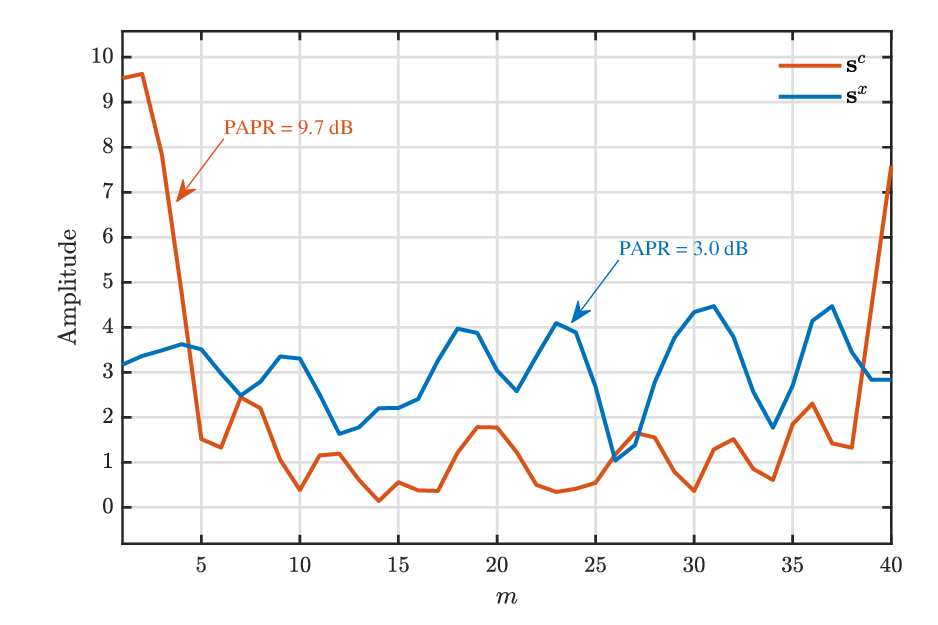}
	\caption{Amplitudes comparison.}
	\label{fig:papr_example}
\end{figure}

%
%

\begin{figure*}[htbp]
	\centering
	\includegraphics[width=1\textwidth]{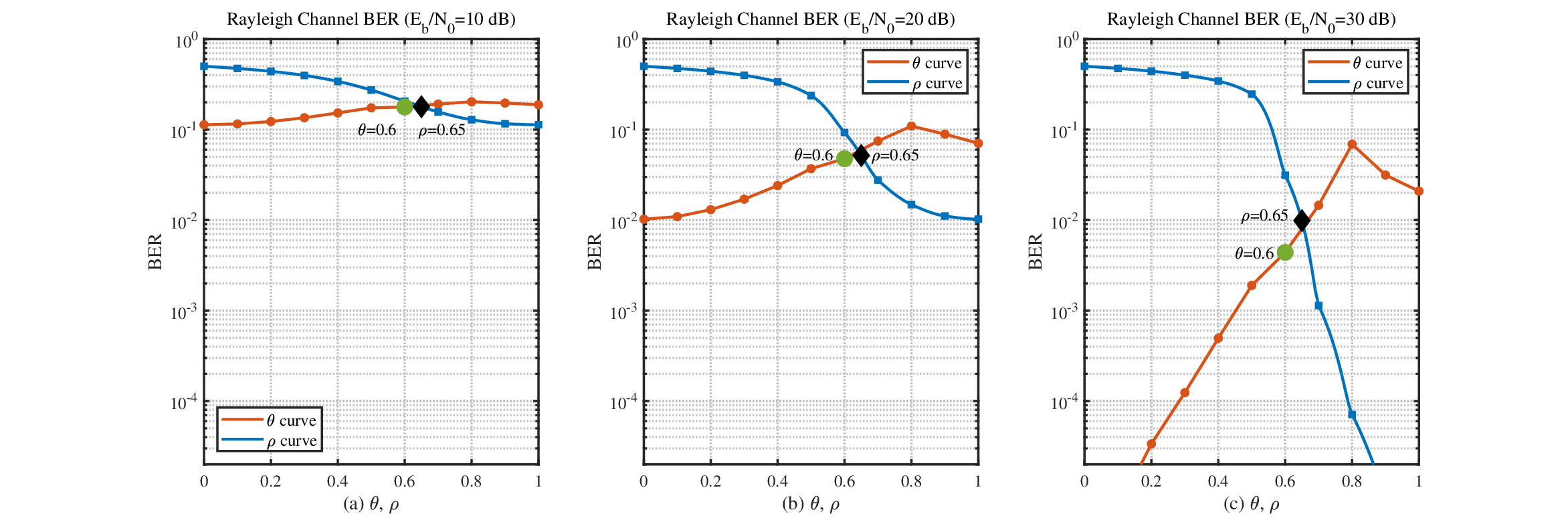}
	\caption{Relationship curves of $\theta$, $\rho$, and BER.}
	\label{output}
\end{figure*}
\subsection{Computational Complexity Analysis }
In Algorithm 1, the computational complexity mainly comes from updating $\mathbf{x}^{(k+1)}$, $ \mathbf{q}^{(k+1)} $, and \( \mathbf{y}^{(k+1) } \), which can be computed by FFT/IFFT. The complexity of FFT/IFFT for length-$N$ \( \mathbf{x} \) is \( \mathcal{O}(N \log_2 N) \) \cite{wang2019}. Consequently, the computational complexity of computing \( \mathbf{x}^{(k+1)} \) in  \eqref{solu:x}, \( \mathbf{q}^{(k+1)} \) in  \eqref{p1}, and \( \mathbf{y}^{(k+1)} \) in \eqref{Zc} are \( \mathcal{O}(M \log_2 M) \), \( \mathcal{O}(N \log_2 N) \), and \( \mathcal{O}(N \log_2 N) \), respectively.  It should be mentioned that the bisection search process for solving the optimal Lagrange multiplier \(\gamma^{(k)} \) typically requires only a small number of iterations, even for large search intervals, and its computational complexity is comparable to or less than a single FFT/IFFT operation \cite{wang2019}. Therefore, it is negligible compared to the FFT/IFFT operations required to compute  $\mathbf{v}^{(k)}$.
	Thus, the total computational complexity of each ADMM-based iteration is roughly $\mathcal{O}(M \log_2 M)$.


\section{Simulation Verification}
In this section, we present simulation results to evaluate the sensing and communications (S\&C) performance of the proposed PLPOI waveforms in OFDM-ISAC systems. The BER is considered as the communication performance metric, while the ambiguity function and radar imaging are used to evaluate sensing performance. Additionally, PAPR is considered for evaluating both S\&C performance. The ISAC system parameters are listed  in TABLE \ref{taba}. 

\begin{table}[!h]
	\centering
	\caption{Simulation Parameters}
	\label{taba}
	\begin{tabular}{llc}
		\toprule
		Symbol & Parameter & Value \\
		\midrule
		\multicolumn{3}{l}{\textbf{OFDM System Parameters}} \\
		$f_c$ & Carrier frequency & 24 GHz \\
		$B$ & Signal bandwidth & 93.1 MHz \\
		$N$ & Number of subcarriers & 1024 \\
		$\Delta f$ & Subcarrier spacing & 90.9 kHz \\
		\midrule
		\multicolumn{3}{l}{\textbf{Timing Parameters}} \\
		$T$ & OFDM symbol duration & 11 $\mu$s \\
		$T_{cp}$ & Cyclic prefix duration & 1.37 $\mu$s \\
		$T_s$ & Total OFDM symbol duration & 12.37 $\mu$s \\
		$G$ & Number of OFDM symbols per generation & 256 \\
		\midrule
		\multicolumn{3}{l}{\textbf{Sensing Parameters}} \\
		$\Delta R$ & Range resolution & 1.61 m \\
		$R_{max}$ & Unambiguous range (limited by $T_{cp}$) & 206.25 m \\
		$\Delta v$ & Velocity resolution & 1.97 m/s \\
		$v_{max}$ & Unambiguous velocity (limited by $\Delta f$) & 113 m/s \\
		\midrule
		\multicolumn{3}{l}{\textbf{Optimization Parameters}} \\
		$\theta$ & Phase difference threshold & 0.6 \\
		$\rho$ & Penalty factor & 10000 \\
		$L$ & Oversampling factor & 4 \\
		$\alpha$ & PAPR threshold & 1.8 dB \\
		\bottomrule
	\end{tabular}
\end{table}

\subsection{PAPR Performance Analysis}
For a comprehensive benchmark comparison, we use the OFDM-ISAC waveform design method presented in~\cite{chen2024} as the primary benchmark due to its representativeness in ISAC waveform design. To ensure a fair evaluation, our proposed scheme is compared with this benchmark under similar conditions.

The benchmark method used in our comparison is derived from the frequency-domain expression (16) in \cite{liu2018}. For the single-antenna case, this optimization problem can be formulated as:

\begin{equation}
	\min_{\mathbf{x}} \rho\|\mathbf{x}-\mathbf{c}\|_2^2 + (1-\rho)\|\mathbf{x}-\mathbf{x_0}\|_2^2
\end{equation}

This formulation is equivalent to equation (62) in Chen et al.~\cite{chen2024} for the single-antenna scenario, where $\rho$ is a weighting parameter controlling the tradeoff between radar and communication performance, $\mathbf{x}$ represents the designed integrated waveform, $\mathbf{c}$ is the communication information signal, and $\mathbf{x_0}$ is the ideal radar waveform with low PAPR characteristics. The closed-form solution to this problem is:
\begin{equation}
	\mathbf{x} = \rho \mathbf{c} + (1-\rho)\mathbf{x_0}
\end{equation}

In our implementation of Chen's method, the corresponding time-domain signal of the selected $\mathbf{x_0}$ has a PAPR of approximately 1.46 dB. As illustrated in Fig.~\ref{output}, under Rayleigh channel conditions with different SNR levels (10, 20, and 30 dB), our method with PD threshold $\theta = 0.6$ and the method in \cite{chen2024} with parameter $\rho = 0.65$ are compared. The following performance comparisons are all based on this set of parameters, i.e., $(\theta, \rho) = (0.6, 0.65)$.

Fig.~\ref{figb} presents the complementary cumulative distribution function (CCDF) comparison of PAPR among different methods: the random OFDM signals, the waveform design method in~\cite{chen2024}, and our proposed PLPOI method introduced in Section~III. It can be observed that the random OFDM signals exhibit a relatively high PAPR value of approximately $12$--$13$ dB in the worst case, while the method in~\cite{chen2024} with $\rho = 0.65$ achieves a PAPR of around $11$--$12$ dB, showing only marginal improvement over the original OFDM. In contrast, our proposed PLPOI method with $\theta = 0.6$ significantly reduces the PAPR to approximately $4$--$5$ dB, demonstrating a substantial 7~dB improvement compared to the benchmark method.

Furthermore, when adjusting our PD threshold to $\theta = 0.7$, our method achieves even better PAPR reduction performance. While both configurations ($\theta = 0.6$ and $\theta = 0.7$) reach similar PAPR values at the $10^{-4}$ probability level, the $\theta = 0.7$ configuration shows more favorable PAPR statistics overall, with its CCDF curve exhibiting lower PAPR values across most of the probability range.

The step-like transitions appearing in the CCDF curves of our proposed method occur because these points rapidly satisfy the strict constraint $\mathbf{s}$ = $\mathbf{A}$$\mathbf{x}$ and reach convergence, limiting the potential for additional PAPR reduction. Despite these characteristics, the overall PAPR reduction performance remains highly effective. This performance improvement can be attributed to the phase difference introduced in our method, which provides greater optimization flexibility across the entire spectrum, while the approach in \cite{chen2024} primarily focuses on balancing the weighted sum of radar similarity and communication multi-user interference (MUI) within a unified optimization framework.

To verify the reliability of this performance improvement, we further investigate the iterative convergence behavior of the PAPR optimization process. As shown in Fig.~\ref{figp}, for $\theta = 0.7$, our proposed method achieves effective PAPR reduction within 50 iterations. This fast and stable convergence characteristic validates the algorithm's practicality in real-world applications.

\begin{figure}[htbp]
	\centering
	\includegraphics[width=0.5\textwidth]{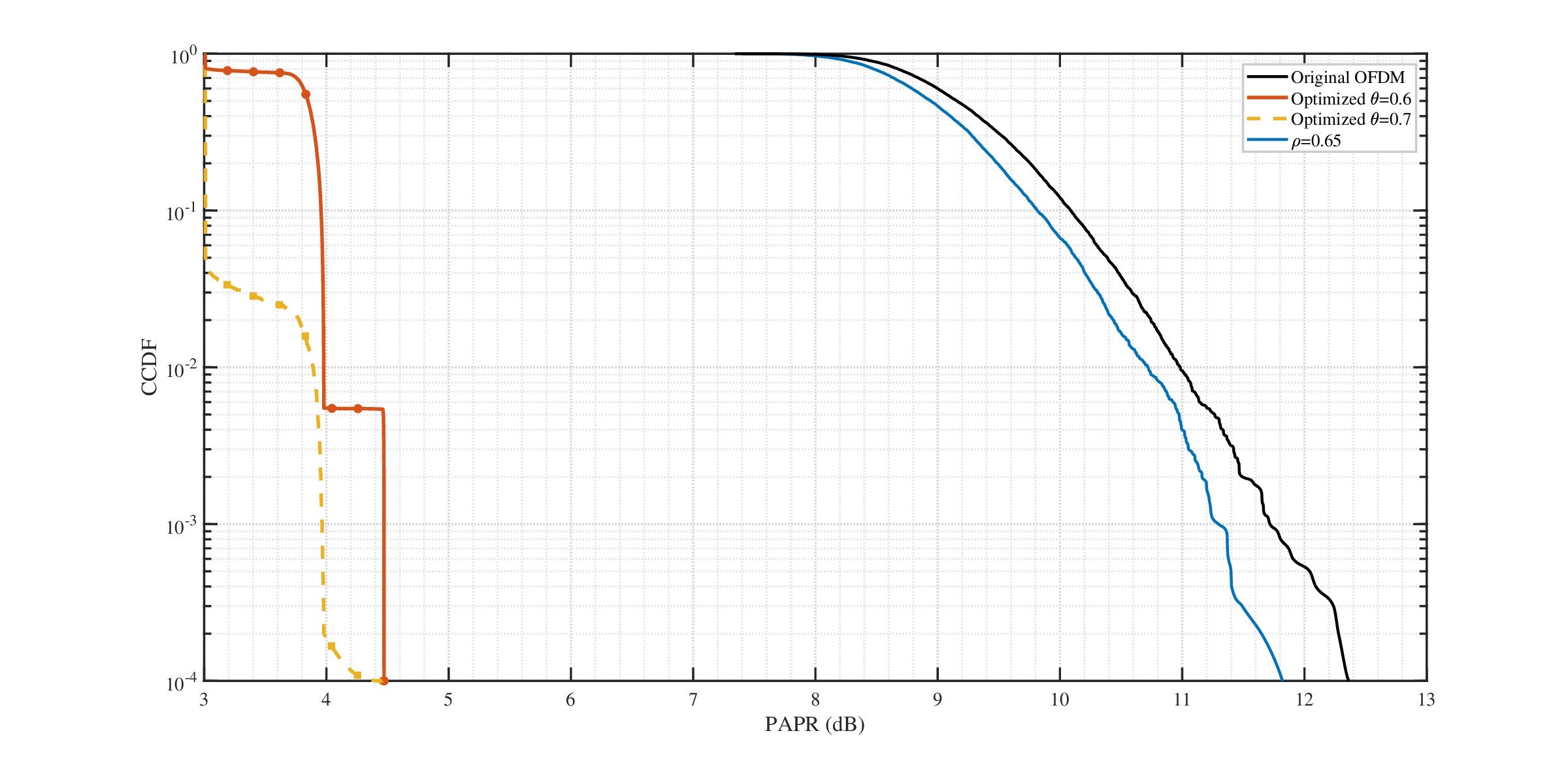}
	\caption{Comparison of PAPR performance.}
	\label{figb}
\end{figure}
\begin{figure}[htbp]
	\centering
	\includegraphics[width=0.5\textwidth]{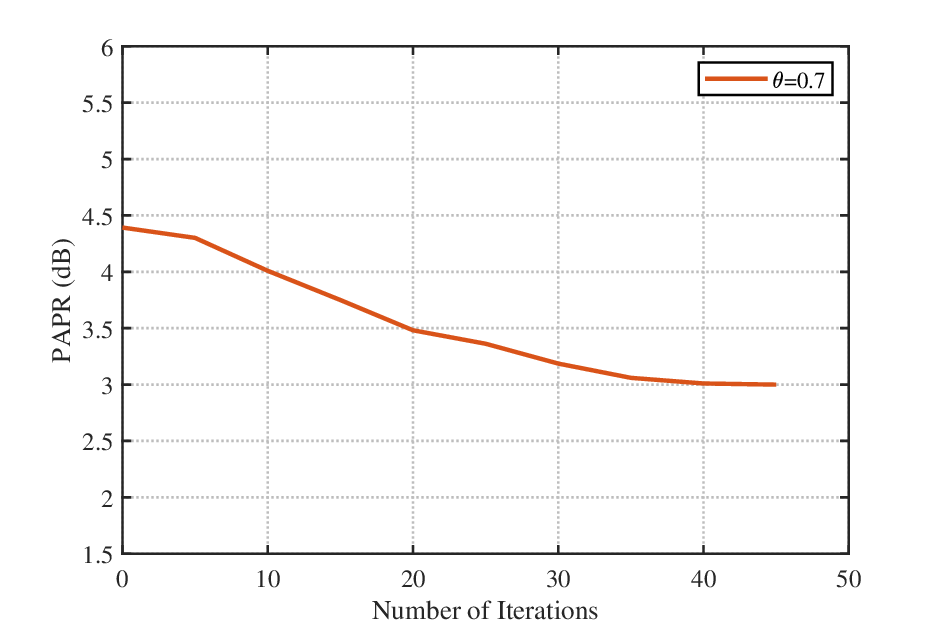}
	\caption{PAPR versus the number of iterations.}
	\label{figp}
\end{figure}





\subsection{Communication Performance Analysis}
\begin{figure}[htbp]
	\centering
	\includegraphics[width=0.5\textwidth]{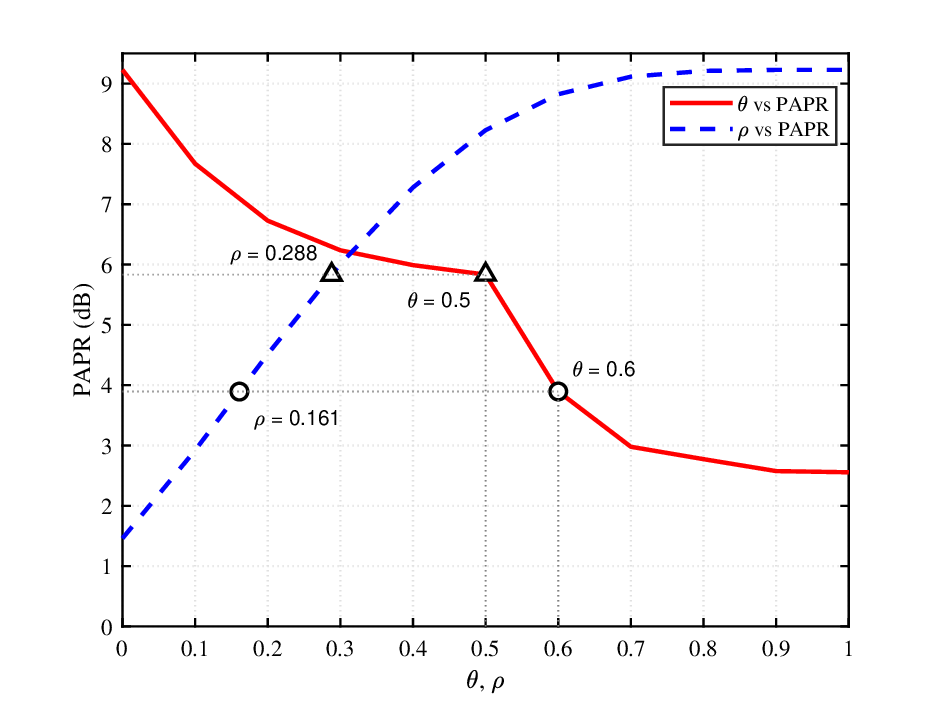}
	\caption{Comparison of PAPR performance with different parameter values.}
	\label{theta10000}
\end{figure}

To enable a fair comparison of communication performance between our proposed method and the benchmark method from~\cite{chen2024}, we identified specific parameter pairs ($\theta$ and $\rho$) that yield equivalent PAPR performance for both approaches. As shown in Fig.~\ref{theta10000}, we established a mapping relationship between these parameters and PAPR values through extensive simulations (10,000 runs), which allowed us to identify intersection points where both methods achieve the same PAPR. Specifically, we selected $\theta = 0.5$ corresponding to $\rho = 0.288$, and $\theta = 0.6$ corresponding to $\rho = 0.161$, as these parameter combinations produce equivalent PAPR performance between the two methods. Based on these matched parameter pairs, we can fairly evaluate the communication performance of both approaches under identical PAPR conditions.

The communication performance is evaluated by BER in both AWGN and Rayleigh fading channels. For QPSK signals in AWGN channel, the theoretical BER expression is given by \cite{cho2002}:
\begin{equation}\label{31}
	P_e = \frac{2(M_{Q}-1)}{M_{Q}\log_2M_{Q}}Q\left(\sqrt{\frac{6E_b}{N_0}\cdot\frac{\log_2M_{Q}}{M_{Q}^2-1}}\right),
\end{equation}
where $M_{Q}=4$ represents the modulation order, and $Q(x)$ is the standard error function defined as:
\begin{equation}\label{32}
	Q(x)=\frac{1}{\sqrt{2\pi}}\int_{x}^{\infty}e^{-t^2/2}dt.
\end{equation}

For Rayleigh fading channel, the theoretical BER is expressed as \cite{simon2000}:
\begin{equation}\label{33}
	P_e=\frac{M_{Q}-1}{M_{Q}\log_2M_{Q}}\left(1-\sqrt{\frac{3\gamma\log_2M_{Q}/(M_{Q}^2-1)}{3\gamma\log_2M_{Q}/(M_{Q}^2-1)+1}}\right),
\end{equation}
where $\gamma=E_b/N_0$ denotes the signal-to-noise ratio.

For the AWGN channel, we examine the BER performance among the theoretical bound, our proposed PD-based approach with $\theta = 0.5$ and $\theta = 0.6$, and the benchmark method in \cite{chen2024} with corresponding parameters $\rho = 0.288$ and $\rho = 0.161$, respectivley. As shown in Fig.~\ref{awg1}, our method with $\theta = 0.5$ achieves substantially better BER performance compared to the method in \cite{chen2024} with $\rho = 0.288$. Specifically, as SNR increases beyond 10 dB, our approach with $\theta = 0.5$ demonstrates significantly superior performance in terms of BER, with the BER curve descending more rapidly compared to the benchmark method with $\rho = 0.288$, eventually achieving nearly an order of magnitude lower error rate at higher SNR values. Similarly, comparing our method with $\theta = 0.6$ to the benchmark with $\rho = 0.161$, we still observe performance advantages, though the gap narrows somewhat. The theoretical curve serves as a reference, showing that all practical implementations exhibit expected performance degradation, with our proposed method maintaining closer alignment to this theoretical bound.

To verify the effectiveness of our approach under more practical channel conditions, we conducted tests under Rayleigh fading channel. As illustrated in Fig.~\ref{ray1}, the performance advantage of our method is maintained in this challenging environment. Specifically, with $\theta = 0.5$, our proposed method consistently outperforms the benchmark method with $\rho = 0.288$ across the entire $E_b/N_0$ range, with the performance gap becoming more pronounced at higher SNR values ($E_b/N_0 > 20$ dB). For example, at $E_b/N_0 = 30$ dB, our method achieves a BER of approximately $10^{-3}$, while the benchmark method remains above $10^{-2}$. Similarly, our method with $\theta = 0.6$ also outperforms the benchmark approach with $\rho = 0.161$ across the evaluated SNR range. These results confirm that the PD-based approach provides reliable communication performance advantages even under challenging fading conditions, while maintaining equivalent PAPR characteristics.

\begin{figure}[!t]
	\centering
	\includegraphics[width=0.5\textwidth]{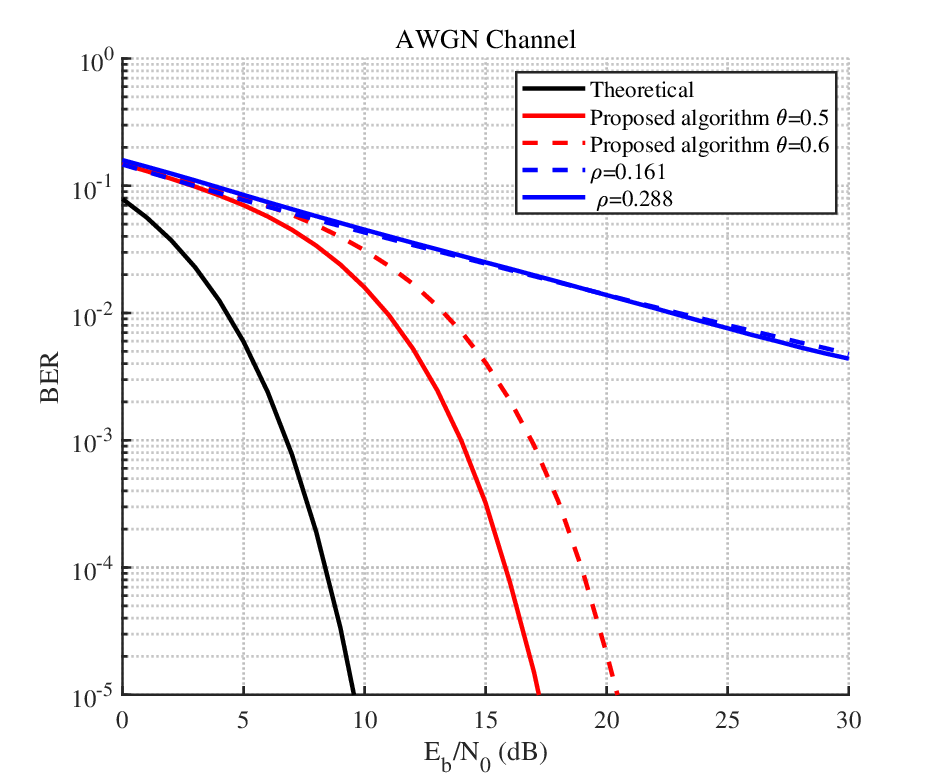}
	\caption{Comparison of Bit Error Rates in AWGN Channel.}
	\label{awg1}
\end{figure}

\begin{figure}[!t]
	\centering
	\includegraphics[width=0.5\textwidth]{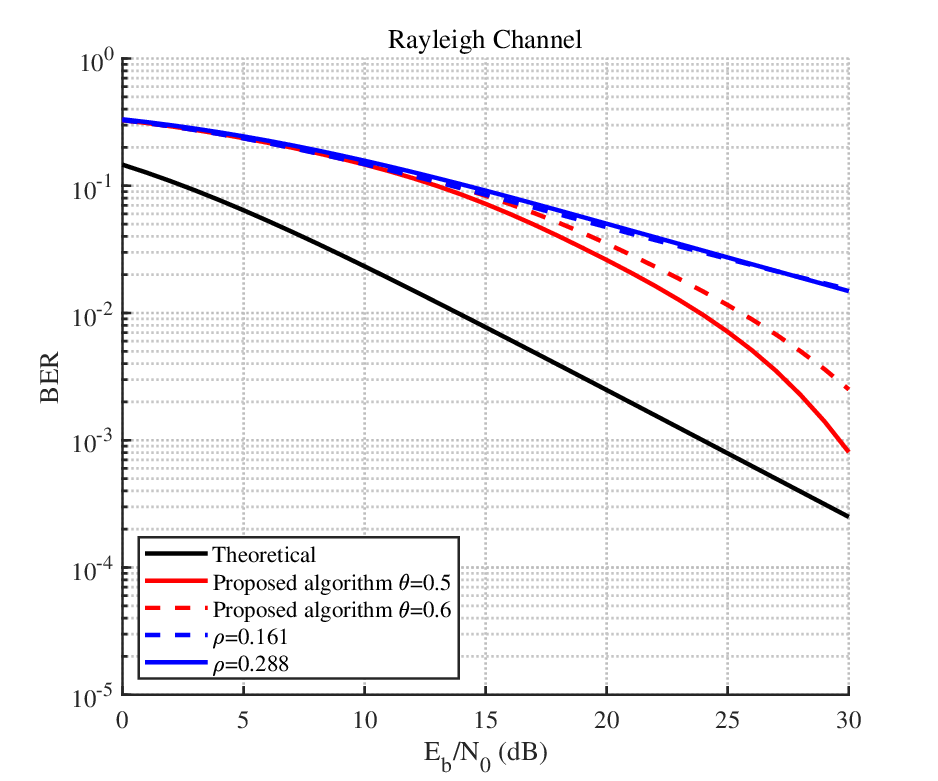}
	\caption{Comparison of Bit Error Rates in Rayleigh Channel.}
	\label{ray1}
\end{figure}

\subsection{Sensing Performance Analysis}
This subsection evaluates the sensing performance of the proposed waveform through analysis of ambiguity function characteristics, multi-target detection capability. We first investigate the ambiguity function characteristics, which is a key metric for evaluating radar system resolution. For the  OFDM signal, the periodic ambiguity function $AF_{\mathbf{s}}(\tau,f)$, $t\in \{0,1,\cdots,N-1\}$, is defined as \cite{ye2022}:

\begin{equation}\label{1}
	AF_{\mathbf{s}}(\tau,f)=\sum_{m=0}^{M-1}s(m)s^*(m+\tau)e^{j2\pi f m/M}.
\end{equation}
where $\tau$ is the delay index and $f$ is the Doppler frequency.

Fig.~\ref{mohu} presents a three-dimensional visualization of the ambiguity function for the  proposed PLPOI waveform without noise, where a sharp peak and rapidly decreasing sidelobes can be clearly observed in both range and velocity dimensions {\footnote{By using the paprmeters in TABLE \ref{taba}, the range and velocity can be calculated by delay and Doppler, respectively.}}. The mainlobe exhibits good concentration, indicating the waveform's potential for high-resolution target detection. The ambiguity function cuts at zero velocity (Doppler) and zero range (delay) provide more detailed insights into the waveform's performance, with further details shown in Fig. \ref{doppwe} and Fig. \ref{juliwei}.

Specifically, Fig.~\ref{doppwe} shows the range profile at zero relative velocity, where the peak sidelobe ratio achieves -330 dB, which benefits from the unimodular property of the frequency domain sequence. This ultra-low sidelobe level facilitates the detection of weak targets in the presence of strong scatterers. Meanwhile, Fig.~\ref{juliwei} presents the Doppler profile at zero range with a peak sidelobe ratio of -28 dB, effectively preventing the masking effect in the velocity dimension. These results demonstrate that the proposed waveform maintains good ambiguity function characteristics essential for radar detection, especially in challenging scenarios with multiple targets or strong clutter.

\begin{figure}[!t]
	\centering
	\includegraphics[width=0.5\textwidth]{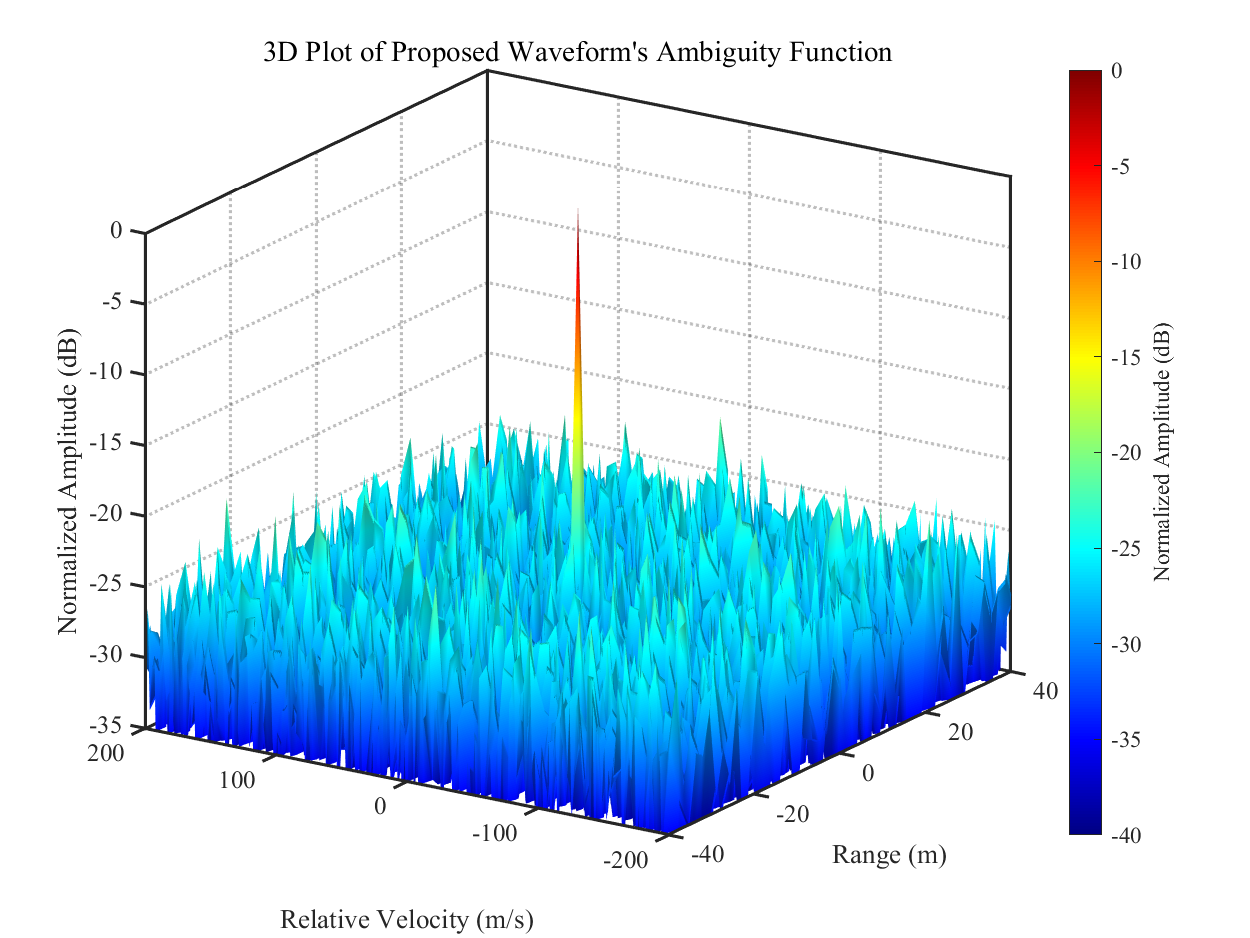}
	\caption{3D Plot of Proposed Waveform's Ambiguity Function.}
	\label{mohu}
\end{figure}

\begin{figure}[!t]
	\centering
	\includegraphics[width=0.5\textwidth]{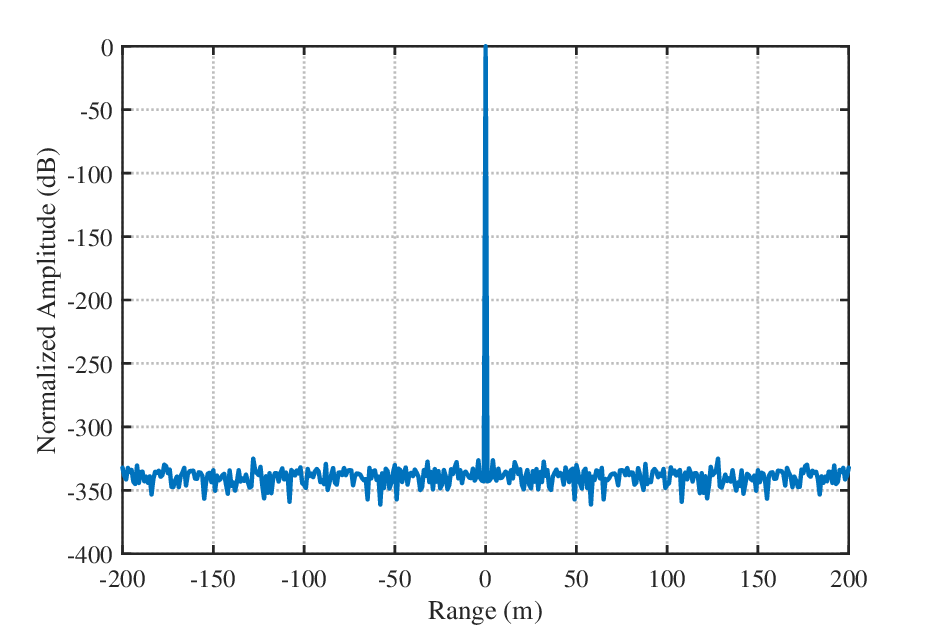}
	\caption{Range profile at 0 m/s relative velocity.}
	\label{doppwe}
\end{figure}

\begin{figure}[!t]
	\centering
	\includegraphics[width=0.5\textwidth]{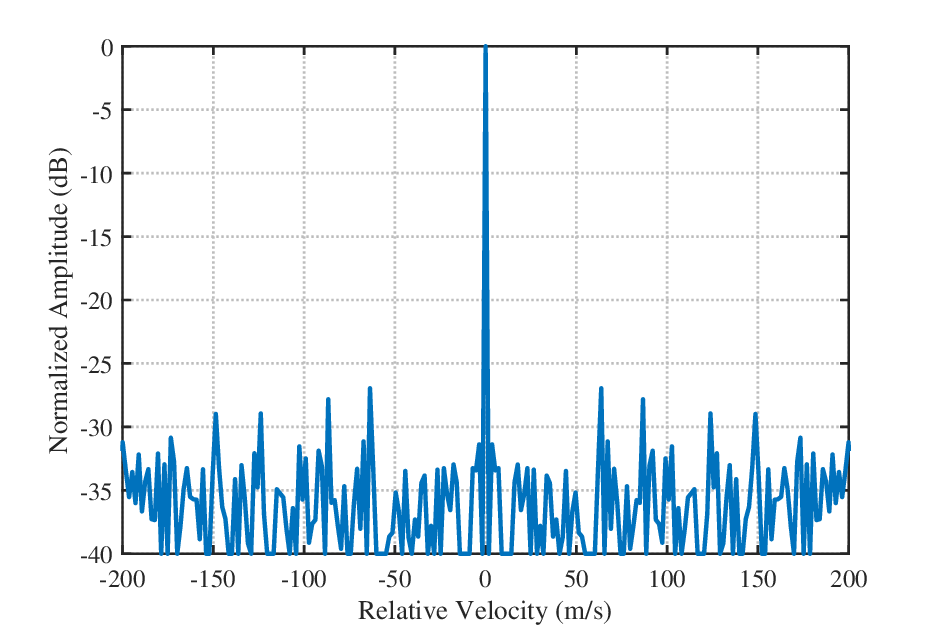}
	\caption{Relative velocity profile at 0 m range.}
	\label{juliwei}
\end{figure}

To further validate the sensing performance in practical scenarios, we conduct multi-target detection simulations. For multiple targets, the received echo signal can be expressed as \cite{cui2017}:     
\begin{equation}\label{2}
	y(m)=b_1{x}(m-\tau_1)e^{j2\pi f_1 m}+b_2x(m-\tau_2)e^{j2\pi f_2 m}+o(m),
\end{equation}     
where $b_1$ and $b_2$ denote the target amplitudes, $o(n)$ represents zero-mean complex Gaussian noise with variance $\sigma^2_v$, and the Doppler frequencies $f_1=2v_1T/\lambda$ and $f_2=2v_2T/\lambda$ correspond to the ranges $R_1$ and $R_2$, respectively.

Fig.~\ref{duomubiao} shows the detection results for two closely spaced targets with parameters specified in TABLE~\ref{tabb}. The first target corresponds to $f_1 = 1.718^{-6}$ with $\tau_1 = 18$, while the second target has $f_2 = 2.577^{-6}$ with $\tau_2 = 21$. The detection result demonstrates several important features of our proposed waveform. First, despite the close spacing between targets (only 4m in range dimension), both targets are clearly distinguishable, indicating excellent range resolution capability. Second, the velocity difference of 5m/s between targets is also well resolved in the Doppler dimension, showing good velocity discrimination ability. Most importantly, unlike traditional waveforms where weaker targets are often masked by the sidelobes of stronger targets, our proposed waveform maintains clear detection of both targets with well-separated peaks and suppressed sidelobes. This is particularly significant in practical radar applications where multiple targets with different radar cross sections often coexist in the same scenario.

\begin{table}[!t]
	\centering
	\caption{Target Parameters for Multi-Target Detection.}
	\label{tabb}
	\begin{tabular}{c|cc}
		\toprule
		\multirow{2}{*}{Target} & \multicolumn{2}{c}{Parameters} \\
		\cmidrule{2-3}
		& Range (m) & Relative Velocity (m/s) \\
		\midrule
		1 & 30 & 10 \\
		2 & 34 & 15 \\
		\bottomrule
	\end{tabular}
\end{table}

\begin{figure}[!t]
	\centering
	\includegraphics[width=0.5\textwidth]{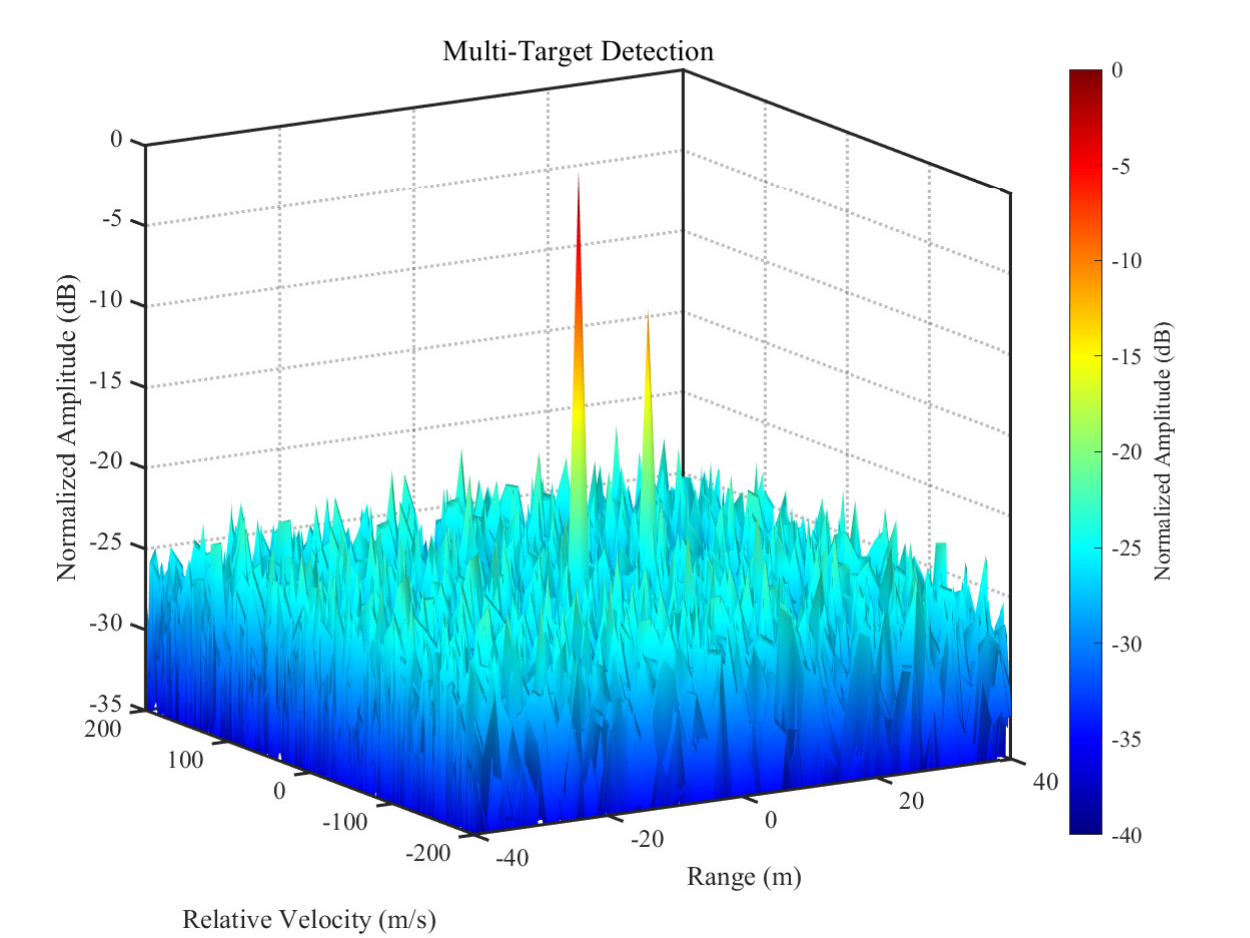}
	\caption{Multiple target scenario.}
	\label{duomubiao}
\end{figure}


These comprehensive results verify that the proposed PD-based waveform design maintains good frequency domain characteristics, achieving excellent sensing performance.



\section{Conclusions}
In this paper, we introduced a novel PD-based OFDM-ISAC waveform structure and designed the corresponding PLPOI waveform. To achieve this, we formulated a non-convex optimization problem that incorporates the time-frequency relationship equation, frequency-domain unimodular constraints, PD conditions, and a time-domain low-PAPR requirement. To efficiently solve this problem, we developed a low-complexity ADMM-PLPOI algorithm, where closed-form solutions are derived for both time-domain and frequency-domain signals. Numerical simulations verify the effectiveness of the proposed PLPOI waveforms in terms of PAPR and BER performance. 




\appendices
\ifCLASSOPTIONcaptionsoff
  \newpage
\fi

\end{document}